\begin{document}
\title{An Improved Solution to the Two Normal Means Problem via Regularization}
\author{Yang Liu\footnote{Department of Human Development and Quantitative Methodology, University of Maryland, College Park. Correspondence author. Email: yliu87@umd.edu}
  \qquad Jonathan P. Williams\footnote{Department of Statistics, North Carolina State University}
}

\abstract{%
  The many-normal-means problem is a classic example that motivates the development of many important inferential procedures in the history of statistics. In this short note, we consider a further special case of the problem, which involves only two normally distributed data points with a constraint that the pair of means  are not too far apart from one another. Starting with a regularized ML estimator, we construct a novel possibilistic IM for marginal inference on one of the two means. Not only does the new IM remain valid, it is also more efficient than the standard marginal inference ignoring the {\it a priori} information about the closeness of means, as well as the partial conditioning IM solution recently proposed in \citet{YangEtAl2023}.\\\bigskip
}

\keywords{
  confidence intervals, inferential models, regularization, statistical inference
} 
\maketitle

\section{Introduction}
\label{s:intro}

The problem of many-normal-means has been treated as a benchmark example for statistical inference and has been extensively studied in the literature \cite[e.g.,][]{Liu2022, Stein1956, YangEtAl2023}. In the most classic form, the problem concerns a sequence of independent but not identically normal random variables such that $Y_i\sim\sN(\theta_i, 1)$, $i = 1,\dots, n$. The parameters of interests are the mean sequence $\{\theta_i\}$. Recently, \citet{YangEtAl2023} investigated an interesting special case of the problem when the consecutive means in the sequence are in the neighborhood of one another, referred to as the H\"older constraints. In its simplest form, the constraints map onto the following restricted parameter space:
\begin{equation}
  \{\theta = (\theta_1, \dots, \theta_n): |\theta_{i + 1} - \theta_i|\le B,\ i = 1,\dots, n - 1\},
  \label{eq:holder}
\end{equation}
in which $\theta$ is the collection of all the mean parameters, and $B > 0$ is a known constant.  An inferential model (IM) solution is derived in \citet{YangEtAl2023} with a partial conditioning argument, which interpolates between the existing conditional and marginal IM solutions \citep{MartinLiu2015a, MartinLiu2015b}.  It is demonstrated both analytically and numerically that, compared to competitive methods, the partially conditional IM yields shorter confidence intervals (CIs) for a focal mean parameter and the CIs are valid in the sense that the achieve their nominally stated frequentist coverage for finite samples.

Motivated by the derivation of \citet{YangEtAl2023}, we develop an alternative strategy to improve the efficiency of valid inferential procedures. The proposed approach hinges upon regularization, a technique of central importance in contemporary statistics. For ease of illustration, we focus on a simple two-means problem (i.e., $n = 2$) that was discussed in Sections 2 and 3 of \citet{YangEtAl2023}. In this special case, the H\"older constraint (\ref{eq:holder}) reduces to requiring $|\theta_2 - \theta_1|\le B$. We show that \citeauthor{YangEtAl2023}'s (\citeyear{YangEtAl2023}) partial conditioning solution can be equivalently obtained from a regularized maximum likelihood (ML) estimator of the means. Moreover, we identify an even more efficient IM solution based on the same regularized estimator for the special case of $n = 2$.

Our paper has three main contributions. First, we provide a modernized and more prescriptive argument to reproduce the partial conditioning solution of \citet{YangEtAl2023}, using possiblistic IM, instead of the random-set IM construction employed in \citet{YangEtAl2023}. Second, we show that a more efficient solution naturally emerges from the possibilistic IM construction, at least for the two-normal-means sub-problem. Third, we demonstrate how to incorporate regularization in the possibilistic IM construction, which not only showcases the flexibility of the IM but also highlights the key role of regularization in obtaining efficient inference.

The rest of the paper is organized as follows. We begin with a brief overview of the possiblistic IM framework. We then reconstruct the \cite{YangEtAl2023} partial conditioning IM solution with an alternative argument based a Wald-type test statistic computed from the regularized ML estimator. Next, we introduce our new IM solution that leads to more efficient inference, utilizing the same regularized ML estimator but a different test statistic. We demonstrate both analytically and numerically that our solution is more efficient than that of \cite{YangEtAl2023}. The paper is concluded with discussions of limitations and possible extensions.

\section{Foundations of Inferential Models}
\label{s:im}
IM is a completely general inferential framework that allows valid probabilistic inference with or without prior information. To situate our discussion, we present a concise overview of IM based on the calculus of possibility measures \cite[e.g.,][]{DuboisPrade1988, Dubois2006}. More details about the theory of possibilistic IM can be found in \citet{LiuMartin2024}, \citet{Martin2022a, Martin2022b, Martin2022c}, and \citet{Martin2025}.

Suppose that the data $Y \in\cY$ follow the distribution $\pr_{Y|\theta^*}$, in which $\theta^*\in\cQ$ denotes the data generating (i.e., true) parameters, $\cY$ is the data space, and $\cQ$ is the parameter space. Given observed data $y\in\cY$, an IM represents degrees of belief about each parameter value in $\cQ$ through a data-dependent map onto the unit interval $[0, 1]$. This map, denoted by $\pi_y:\cQ\to[0, 1]$, should satisfy $\sup_{\theta\in\cQ}\pi_y(\theta) = 1$ and thus is a \emph{possibility contour}.\footnote{We adopt this term from \citet{LiuMartin2024}. In the literature of possibility theory \citep[e.g.,][]{DuboisPrade1988}, such a function is more commonly referred to as a ``possibility distribution.''} An IM possibility contour must also be \emph{valid} in the frequentist sense: For any hypothesis $H\subseteq\cQ$ and $\alpha\in[0, 1]$,
\begin{equation}
  \pr_{Y|\theta}\{\pi_Y(\theta)\le\alpha\} \le\alpha.
  \label{eq:valid}
\end{equation}
An IM possibility contour can be conveniently constructed as a $p$-value function using a test statistic $T: \cY\times\cQ\to\real$ by
\begin{equation}
  \pi_y(\theta) = \pr_{Y|\theta}\{T(Y, \theta)\ge T(y, \theta)\}.
  \label{eq:contour}
\end{equation}
The validity requirement (\ref{eq:valid}) is established for (\ref{eq:contour}) from the probability integral transform \citep[e.g.,][Theorem 2.1.10 and Exercise 2.10]{CasellaBerger2002}. The test statistic $T$ in (\ref{eq:contour}) can be arbitrarily chosen. Standard constructions of possibilistic IM are often based on the relative likelihood ratio statistic \citet{Martin2022a, Martin2022b, Martin2022c}. By establishing a possibilistic Bernstein-von Mises theorem, it is shown in \citet{MartinWilliams2025} that using the relative likelihood ratio statistic leads to asymptotically efficient inference.  The asymptotic efficiency, however, relies on the classical Cram{\'e}r or Le Cam regularity conditions for consistency of an ML estimate; in particular, it assumes a fixed parameter dimension as the sample size is taken to infinity.  Accordingly, the efficiency of the relative likelihood-based possibilistic IM construction does not apply to the over-parametrized, many-normal-means problem.  To deal with the over-parameterization, additional structure is needed, and a popular approach is to incorporate regularization constraints.  We therefore consider and develop {\em regularized}, relative-likelihood-ratio-based IM extensions.


In many practical problems, we are only interested in making inference about a single coordinate of $\theta$, referred to as the \emph{focal parameter}. Without loss of generality, partition $\theta = (\varphi, \nu)$, in which $\varphi\in\real$ denotes the single focal parameter and $\nu$ denotes the \emph{nuisance parameters}. Given a possibility contour $\pi_y$ for overall inference on $\theta$ that satisfies (\ref{eq:valid}), a marginal possibility contour for $\varphi$ can be obtained by taking the supremum over the nuisance parameters:
\begin{equation}
  \varpi_y(\varphi) = \sup_{\nu: (\varphi, \nu)\in\cQ}\pi_y(\varphi, \nu),
  \label{eq:mcontour}
\end{equation}
in which $\pi_y(\varphi, \nu)$ is a shorthand for $\pi_y\{(\varphi, \nu)\}$. In fact, (\ref{eq:mcontour}) is valid for marginal inference on $\varphi$ in that
\begin{equation}
  \pr_{Y|\varphi, \nu}\{\varpi_Y(\varphi)\le\alpha\}\le\pr_{Y|\varphi, \nu}\left\{\pi_y(\varphi, \nu)\le\alpha\right\}\le\alpha.
  \label{eq:validm}
\end{equation}
Marginal IM based on (\ref{eq:mcontour}) can be used to generate general purpose inference for the focal parameter $\varphi$, but the present work is primarily concerned with the construction of CIs. It follows from (\ref{eq:validm}) that the \emph{upper $\alpha$-cut} of $\varpi_y$, 
\begin{equation}
  C(\alpha; y) = \{\varphi: \varpi_y(\varphi) > \alpha\},
  \label{eq:mci}
\end{equation}
is a $100(1-\alpha)\%$ confidence region for $\varphi$. While (\ref{eq:mci}) is not necessarily an interval, though this is the case in all subsequent examples.

As an illustration, we apply the IM framework to derive the standard CI for the two-normal-means problem. Recall that $Y_1\sim\sN(\theta_1, 1)$ and $Y_2\sim\sN(\theta_2, 1)$ are two independent normal random variables. Here, $\theta = (\theta_1, \theta_2)$ and we focus on the marginal inference for $\theta_2$ (i.e., $\varphi = \theta_2$ and $\nu = \theta_1$). Without having knowledge about any relationship between $\theta_1$ and $\theta_2$ or $Y_{1}$ and $Y_{2}$, the common practice in making marginal inference on $\theta_2$ is to completely ignore the observed $y_1$.  Consider the canonical test statistic $T(Y, \theta) = (Y_2 - \theta_2)^2$, which follows a $\chi^2(1, 0)$ distribution (i.e., a chi-square distribution with degree of freedom 1 and noncentrality parameter 0) under $\pr_{Y|\theta}$. Then the marginal possibility contour based on ignorance can be expressed as
\begin{equation}\label{eq:simpcontour}
\begin{aligned}
\varpi_{y_2}(\theta_2) & = \sup_{\theta_1: (\theta_1,\theta_2)\in\cQ}\pi_y(\theta) = \sup_{\theta_1: (\theta_1,\theta_2)\in\cQ}\pr_{Y|\theta}\{(Y_2 - \theta_2)^2\ge (y_2 - \theta_2)^2\} \\
& = \pr_{Y_2|\theta_2}\{(Y_2 - \theta_2)^2\ge (y_2 - \theta_2)^2\} = 1 - F\left\{ (y_2 - \theta_2)^2; 1, 0 \right\},
\end{aligned}
\end{equation}
in which $F(\cdot; k, \gamma)$ denotes the cumulative distribution function (CDF) of the $\chi^2(k, \gamma)$ distribution (with degrees of freedom $k > 0$ and noncentrality parameter $\gamma\ge 0$). The upper $\alpha$-cut of (\ref{eq:simpcontour}) yields the standard textbook CI for $\theta_2$:
\begin{equation}
  C_0(\alpha; y) = [y_2 - z_{1-\alpha/2}, y_2 + z_{1 - \alpha/2}],
  \label{eq:stdci}
\end{equation}
in which $z_\beta$ denotes the $\beta$th quantile of the standard Gaussian distribution. The confidence limits in (\ref{eq:stdci}) are obtained by solving for $\theta_2$ from $\varpi_{y_2}(\theta_2) = \alpha$, noting that $z_{1-\alpha/2}^2$ coincides with the $(1-\alpha)$th quantile of $\chi^2(1, 0)$.

\section{An Alternative Construction of the Partial Conditioning Solution}
\label{s:first}
Leveraging the H\"older constraint, $|\theta_2 - \theta_1|\le B$, it is shown in \citet{YangEtAl2023} that the standard CI (\ref{eq:stdci}) can be improved. The intuition is that if $\theta_1$ is known to be in the vicinity of $\theta_2$, then observing $y_1$ in addition to $y_2$ should provide more information about $\theta_2$ than observing $y_2$ alone. The derivation of \citet{YangEtAl2023} invokes a partial conditioning argument in combination with the classic IM formulation based on predictive random sets. In this section, we show that the same solution can be obtained via a more straightforward construction of possibilistic IM using a regularized estimator of $\theta_2$.

Consider the regularized negative log-likelihood function for the two-normal-means problem
\begin{equation}
  \ell(\theta, \lambda; y) = \frac{(y_1 - \theta_1)^2}{2} + \frac{(y_2 - \theta_2)^2}{2} + \frac{\lambda(\theta_1 - \theta_2)^2}{2},
  \label{eq:rnll}
\end{equation}
in which $\lambda$ is a non-negative penalty weight, and constants irrelevant to $\theta$ and $y$ are omitted. In (\ref{eq:rnll}), we incorporate a ridge-type penalty on the difference $\theta_1 - \theta_2$, a choice motivated by differentiability and the {\it a priori} information that the magnitude of the difference is small. Because (\ref{eq:rnll}) is convex and quadratic in $\theta$, it has a unique minimizer
\begin{equation}
  \hat\theta(y; \lambda) = \left[\hat\theta_1(y; \lambda),
  \hat\theta_2(y; \lambda)\right]= \left[
  \frac{(1 + \lambda)y_1 + \lambda y_2}{1 + 2\lambda},
\frac{\lambda y_1 + (1 + \lambda)y_2}{1 + 2\lambda}\right],
  \label{eq:rmle}
\end{equation}
which is referred to as the \emph{regularized ML estimator} of $\hat\theta$. When $\lambda = 0$, the regularized estimator reduces to the usual ML estimator, $\hat\theta(y; 0) = y$.
 
To perform inference on $\theta_2$, we construct a test statistic based on the second coordinate of (\ref{eq:rmle}), $\hat\theta_2(y; \lambda)$. Note that
\begin{equation}
  (1 + 2\lambda)\left[\hat\theta_2(Y; \lambda) - \theta_2\right] = \lambda (Y_1 - \theta_2) + (1 + \lambda)(Y_2 - \theta_2)\sim\sN(\lambda(\theta_1 - \theta_2), \lambda^2 + (1 + \lambda)^2)
  \label{eq:diff}
\end{equation}
under $\pr_{Y|\theta}$; therefore, we define the following central chi-square statistic:
\begin{equation}
  T_1(Y, \theta; \lambda) =  \frac{\left[\lambda (Y_1 - \theta_2) + (1 + \lambda)(Y_2 - \theta_2) - \lambda(\theta_1 - \theta_2)\right]^2}{\lambda^2 + (1 + \lambda)^2}.
  \label{eq:wald1}
\end{equation}
The joint possibility contour corresponding to (\ref{eq:wald1}) is
\begin{equation}
  \begin{aligned}
    \pi_{y, 1}(\theta; \lambda) &=  \pr_{Y|\theta}\{T_1(Y, \theta; \lambda)\ge T_1(y, \theta; \lambda)\}\\
&= 1 - F\left( \frac{\left[\lambda (Y_1 - \theta_2) + (1 + \lambda)(Y_2 - \theta_2) + \lambda\theta_2 - \lambda\theta_1\right]^2}{\lambda^2 + (1 + \lambda)^2} ; 1, 0\right).
  \end{aligned}
  \label{eq:waldcontour1}
\end{equation}
Under the H\"older constraint, the marginal contour for $\theta_2$ is obtained by taking the supremum of $\pi_{y, 1}(\theta; \lambda)$ with respect to $\theta_1$ over the interval $[\theta_2 - B, \theta_2 + B]$:
\begin{equation}
  \begin{aligned}
    &\varpi_{y, 1}(\theta_2; \lambda) = \sup_{\theta_1\in[\theta_2 - B, \theta_2 + B]}\pi_{y, 1}(\theta; \lambda)\\
    &=  \begin{cases}
      \displaystyle 1 - F\left( \frac{[\lambda(y_1 - \theta_2) + (1 + \lambda)(y_2 - \theta_2) - \lambda B]^2}{\lambda^2 + (1 + \lambda)^2}; 1, 0 \right),&\displaystyle\theta_2\le\frac{\lambda y_1 + (1 + \lambda)y_2 - \lambda B}{1 + 2\lambda};\\
    \displaystyle 1 - F\left( \frac{[\lambda(y_1 - \theta_2) + (1 + \lambda)(y_2 - \theta_2) + \lambda B]^2}{\lambda^2 + (1 + \lambda)^2}; 1 ,0\right),&\displaystyle\theta_2\ge\frac{\lambda y_1 + (1 + \lambda)y_2 + \lambda B}{1 + 2\lambda};\\
    1,&\hbox{otherwise.}
  \end{cases}
  \end{aligned}
  \label{eq:mwaldcontour1}
\end{equation}
To see why (\ref{eq:mwaldcontour1}) holds, note that the graph of (\ref{eq:waldcontour1}), when viewed as a function of $\theta_2$, reaches the maximum 1 at $\theta_2 = (1+\lambda)^{-1}[\lambda y_1 + (1 + \lambda)y_2 - \lambda\theta_1]$. As we vary $\theta_1$ within the interval $[\theta_2 - B, \theta_2 + B]$, the graph simply shifts with the mode moving between $(1+2\lambda)^{-1}[\lambda y_1 + (1 + \lambda)y_2 \pm \lambda B]$. For $\theta_2$ values within this interval, the supremum of (\ref{eq:waldcontour1}) is always 1. To the left (resp. right) of the interval, the supremum traces the version of (\ref{eq:waldcontour1}) when $\theta_1 = \theta_2 - B$ (resp. when $\theta_1 = \theta_2 + B$). 

By the connection between the $\chi^2(1, 0)$ and $\sN(0, 1)$ distributions, the $\alpha$-cut of (\ref{eq:mwaldcontour1}), which is a marginal CI for $\theta_2$, has the following explicit expression:
\begin{equation}
  \begin{aligned}
    C_1(\alpha; y) = \bigg[  &
    \frac{\lambda y_1  + (1 + \lambda)y_2 - \lambda B - z_{1-\alpha/2}\sqrt{\lambda^2  + (1 + \lambda)^2}}{1 + 2\lambda},\\
    &\frac{\lambda y_1 + (1 + \lambda)y_2 + \lambda B + z_{1-\alpha/2}\sqrt{\lambda^2  + (1 + \lambda)^2}}{1 + 2\lambda}
  \bigg].
  \end{aligned}
  \label{eq:cut1}
\end{equation}
The length of (\ref{eq:cut1}) is given by
\begin{equation}
  L_1(\lambda; \alpha, B) = \frac{2}{1 + 2\lambda}\left[ \lambda B + z_{1-\alpha/2}\sqrt{\lambda^2 + (1 + \lambda)^2} \right],
  \label{eq:len1}
\end{equation}
which is not dependent on data $y$ or parameters $\theta$. For fixed $\alpha\in(0, 1)$ and $B\ge 0$, we can find the optimal penalty weight as the minimizer of (\ref{eq:len1}):
\begin{equation}
  \lambda_1^*(\alpha, B) = \frac{-B + \sqrt{-B^2 + 2z_{1-\alpha/2}^2}}{2B}\cdot\ind\{B\le z_{1 -\alpha/2}\}.
  \label{eq:optlambda}
\end{equation}
Corresponding to (\ref{eq:optlambda}), the optimal length of the marginal CI (\ref{eq:cut1}) is $B + \sqrt{-B^2 + 2z_{1-\alpha/2}^2}$ when $B\le z_{1-\alpha/2}$ and $2z_{1-\alpha/2}$ otherwise. This is identical to the partial conditioning IM solution derived in \citet{YangEtAl2023}.

\section{An Improved Regularization Solution}
\label{s:second}

We proceed to demonstrate that a slight change made to the test statistic results in more efficient marginal inference for $\theta_2$ in the two-normal-means problem. We also establish that the improvement is uniform across all $B > 0$ and $\alpha\in (0, 1)$.

\subsection{Derivation of the Confidence Interval}
\label{ss:ci}
We consider again the regularized ML estimator, $\hat\theta_2(y; \lambda)$. This time, we defined the following Wald-type statistic without centering:
\begin{equation}
  T_2(Y, \theta; \lambda) = \frac{\left[\lambda (Y_1 - \theta_2) + (1 + \lambda)(Y_2 - \theta_2)\right]^2}{\lambda^2 + (1 + \lambda)^2}.
  \label{eq:wald2}
\end{equation}
Compared to (\ref{eq:wald1}), (\ref{eq:wald2}) does not contain the additional centering term $\lambda(\theta_1 - \theta_2)$ within the bracket in the numerator. By (\ref{eq:diff}) under $\pr_{Y|\theta}$, $T_2(Y, \theta; \lambda)$ follows a noncentral chi-square distribution with degree of freedom 1 and noncentrality parameter $\lambda^2(\theta_1 - \theta_2)^2[\lambda^2 + (1 + \lambda)^2]^{-1}$. The joint possibility contour for $\theta$ corresponding to \ref{eq:wald2} can then be expressed as
\begin{equation}
  \begin{aligned}
    \pi_{y, 2}(\theta; \lambda) &=  \pr_{Y|\theta}\{T_2(Y, \theta; \lambda)\ge T_2(y, \theta; \lambda)\}\\
    &= 1 - F\left( \frac{\left[\lambda (y_1 - \theta_2) + (1 + \lambda)(y_2 - \theta_2)\right]^2}{\lambda^2 + (1 + \lambda)^2} ; 1, \frac{\lambda^2(\theta_1 - \theta_2)^2}{\lambda^2 + (1 + \lambda)^2}\right).
  \end{aligned}
  \label{eq:waldcontour2}
\end{equation}
Note that (\ref{eq:waldcontour2}) is symmetric around $[\lambda y_1 + (1 + \lambda)y_2]/(1 + 2\lambda)$. Moreover, it increases as $(\theta_1 - \theta_2)^2$ increases due to stochastic monotonicity of the non-central chi-square distribution with respect to its non-centrality parameter. Therefore, for each fixed $\theta_2$, the supremum of (\ref{eq:waldcontour2}) over $\theta_1 \in[\theta_2 - B, \theta_2 + B]$ is attained at $(\theta_1 - \theta_2)^2 = B^2$. The resulting supremum serves as a valid marginal possibility contour for $\theta_2$ and has the following expression:
\begin{equation}
  \begin{aligned}
    \varpi_{y, 2}(\theta_2; \lambda) &=  \sup_{\theta_1\in[\theta_2 - B, \theta_2 + B]}\pi_{y, 2}(\theta; \lambda)\\
    &=  1 - F\left( \frac{\left[\lambda (y_1 - \theta_2) + (1 + \lambda)(y_2 - \theta_2)\right]^2}{\lambda^2 + (1 + \lambda)^2}; 1, \frac{\lambda^2B^2}{\lambda^2 + (1 + \lambda)^2}\right).
  \end{aligned}
  \label{eq:mwaldcontour2}
\end{equation}

Let $g(\lambda, B) = \lambda^2B^2[\lambda^2 + (1 + \lambda)^2]^{-1}$ be the noncentrality parameter in (\ref{eq:mwaldcontour2}) and $Q_{1-\alpha}(\gamma)$ the $(1 - \alpha)$th quantile of the $\chi^2(1, \gamma)$ distribution. Solving $\theta_2$ from $\varpi_{y, 2}(\theta_2; \lambda) = \alpha$ yields the following upper $\alpha$-cut of (\ref{eq:mwaldcontour2}):
\begin{equation}
  \begin{aligned}
    C_2(\alpha; y) =  \bigg[ &
      \frac{\lambda y_1 + (1 + \lambda)y_2 - \sqrt{Q_{1 - \alpha}\{g(\lambda, B)\}[\lambda^2 + (1 + \lambda)^2]}}{1 + 2\lambda},\\
      &\frac{\lambda y_1 + (1 + \lambda)y_2 + \sqrt{Q_{1 - \alpha}\{g(\lambda, B)\}[\lambda^2 + (1 + \lambda)^2]}}{1 + 2\lambda}
  \bigg].
  \end{aligned}
  \label{eq:cut2}
\end{equation}
The length of (\ref{eq:cut2}) is 
\begin{equation}
  L_2(\lambda; \alpha, B) = \frac{2}{1 + 2\lambda}\sqrt{Q_{1 - \alpha}\{g(\lambda, B)\}[\lambda^2 + (1 + \lambda)^2]}.
  \label{eq:len2}
\end{equation}
Similar to the first regularized solution (Section \ref{s:first}), the expression (\ref{eq:len2}) does not depend on $y$, which allows us to find an optimal penalty weight that minimizes the length $L_2(\lambda; \alpha, B)$. However, the dependency of the length on $\lambda$ is analytically intractable due to the involvement of the non-central chi-square quantile. Next, we establish two results about $L_2(\lambda; \alpha, B)$. The first result states that the length as a function of the penalty weight $\lambda$ has a minimum, so that finding the optimal weight numerically is feasible. The second result concerns the comparison of $L_2(\lambda; \alpha, B)$ and $L_1(\lambda; \alpha, B)$ for any fixed triplet $\lambda$, $\alpha$, and $B$, from which we conclude that our new regularization solution dominates the partial conditioning solution of \cite{YangEtAl2023} in terms of efficiency.

\subsection{Analytical Results}
\label{ss:result}
In Figure \ref{fig:len}A, we plot the length functions for $\alpha = 0.05, 0.1$, and 0.2, respectively, while fixing $y = (1, 0.5)$ and $B = 1$. It appears from the graph that the minimum of the length function is always attainable. We justify this observation in Proposition \ref{prop:min}, which makes numerical search for the optimal penalty weight feasible.
\begin{figure}[!t]
  \centering
  \includegraphics[width=\textwidth]{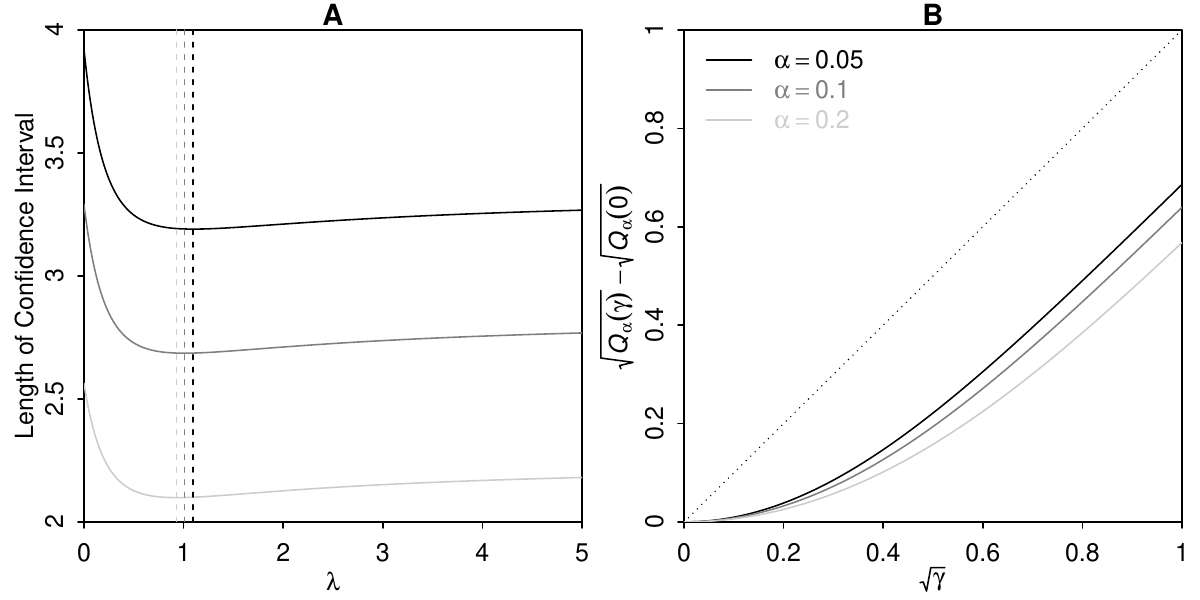}
  \caption{Illustrations for the length function $L_2(\lambda; \alpha, B)$. Panel A: Length functions for $\alpha = 0.05$ (black), 0.1 (dark gray), and 0.2 (light gray), fixing observed data $y = (1, 0.5)$ and bound $B = 1$. The minimums, found approximately by numerical search, are indicated by vertical dashed lines with matching colors. Panel B: $\sqrt{Q_\alpha(\gamma)} - \sqrt{Q_\alpha(0)}$ plotted against $\sqrt{\gamma}$, where $Q_\alpha(\gamma)$ is the $\alpha$th quantile of $\chi^2(1, \gamma)$ and $\gamma\ge 0$ is the noncentrality parameter.}
  \label{fig:len}
\end{figure}
\begin{proposition}
  For any given $\alpha\in(0, 1)$ and $B > 0$, the length $L_2(\lambda; \alpha, B)$ is minimized at some $\lambda\in(0, \infty)$.
  \label{prop:min}
\end{proposition}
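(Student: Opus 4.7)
The plan is to show that $L_2(\lambda;\alpha,B)$ is continuous on $(0,\infty)$, extends continuously to the compactification $[0,\infty]$, and attains values strictly less than both boundary limits, so that its infimum is necessarily achieved in the open interval. Continuity on $(0,\infty)$ reduces to continuity of $\gamma\mapsto Q_{1-\alpha}(\gamma)$, which follows from the implicit function theorem applied to $F(q;1,\gamma)=1-\alpha$, using strict positivity and smoothness of the non-central chi-square density. Direct simplification gives the boundary limits: as $\lambda\to 0^+$, $g(\lambda,B)\to 0$ and $[\lambda^2+(1+\lambda)^2]/(1+2\lambda)^2\to 1$, so $L_2(0^+;\alpha,B)=2z_{1-\alpha/2}$; as $\lambda\to\infty$, the same two ratios tend to $B^2/2$ and $1/2$, so $L_2(\infty;\alpha,B)=\sqrt{2\,Q_{1-\alpha}(B^2/2)}$.

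Next, I would compute one-sided derivatives of $L_2^2(\lambda)=4\,Q_{1-\alpha}(g(\lambda,B))\,h(\lambda)$, with $h(\lambda)=(2\lambda^2+2\lambda+1)/(1+2\lambda)^2$. At $\lambda=0$, one checks $g(0,B)=0$, $g_\lambda(0,B)=0$, $h(0)=1$, and $h'(0)=-2$, so the chain rule collapses to
\[
\frac{d}{d\lambda}L_2^2\Big|_{\lambda=0^+}=4\,Q_{1-\alpha}(0)\,h'(0)=-8z_{1-\alpha/2}^2<0,
\]
and $L_2$ strictly decreases away from $0$. For the other boundary, routine differentiation yields $g_\lambda(\lambda,B)=2B^2\lambda(\lambda+1)/(2\lambda^2+2\lambda+1)^2\sim B^2/(2\lambda^2)$ and $h'(\lambda)=-2/(1+2\lambda)^3\sim -1/(4\lambda^3)$. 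Hence the first term $4\,Q_{1-\alpha}'(g)\,g_\lambda\,h$ in $(L_2^2)'$ is positive and of order $\lambda^{-2}$, whereas the second term $4\,Q_{1-\alpha}(g)\,h'$ is of order $\lambda^{-3}$; so $(L_2^2)'(\lambda)>0$ for all sufficiently large $\lambda$, meaning $L_2$ approaches its limit at $\infty$ from below and $\inf L_2<L_2(\infty;\alpha,B)$.

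Combining these two one-sided facts with continuity of the extension on the compact set $[0,\infty]$, the minimum is attained and cannot occur at either endpoint, hence it is achieved at some interior $\lambda^{\ast}\in(0,\infty)$. The main obstacle is the tail analysis: one must justify both the differentiability of $\gamma\mapsto Q_{1-\alpha}(\gamma)$ and the strict positivity of $Q_{1-\alpha}'(B^2/2)$, both of which reduce to strict monotonicity of the non-central chi-square CDF in its noncentrality parameter---a known consequence of stochastic ordering of the non-central chi-square family in $\gamma$. The remaining computations are routine differentiation of rational expressions.
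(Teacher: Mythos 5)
Your proof is correct and follows essentially the same route as the paper's: both establish that the derivative of the length (you use $L_2^2$, the paper uses $L_2$ directly) is negative at $\lambda=0$ and eventually positive as $\lambda\to\infty$ because the term involving $Q_{1-\alpha}'$ dominates the term involving $Q_{1-\alpha}$ in order of growth, then conclude the minimum is interior. Your explicit boundary limits and compactification of $[0,\infty]$ are a slightly more careful packaging of the paper's concluding continuity argument, and both treatments equally defer the positivity of $Q_{1-\alpha}'$ to stochastic monotonicity of the noncentral chi-square family.
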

\begin{proof}
  Differentiating $L_2(\lambda; \alpha, B)$ with respect to $\lambda$ yields
  \begin{equation}
    \begin{aligned}
      & L_2'(\lambda; \alpha, B) = (1 + 2\lambda)^{-2} \left[\lambda^2 + (1 + \lambda)^2\right]^{-3/2} Q_{1-\alpha}(g(\lambda, B))^{-1/2}\\
      &\qquad\cdot \left\{2 \lambda(1+\lambda)(1+2\lambda) Q_{1-\alpha}'(g(\lambda, B) )-2 \left[\lambda^2 + (1 + \lambda)^2\right] Q_{1-\alpha}(g(\lambda, B))\right\}.
    \end{aligned}
    \label{eq:deriv}
  \end{equation}
  On the one hand, $g(0, B) = 0$ and thus $L_2'(0; \alpha, B) = -2\sqrt{Q_{1-\alpha}(0)} < 0$. On the other hand, $g(\lambda, B)\to B^2/2$ as $\lambda\to\infty$, and both $Q_{1-\alpha}(\gamma)$ and $Q_{1-\alpha}'(\gamma)$ are positive for all $\gamma\ge 0$. Then there exists $\bar\lambda\in(0, \infty)$ such that (\ref{eq:deriv}) is positive for all $\lambda \ge \bar\lambda$. This is because $2\lambda(1+\lambda)(1 + 2\lambda)$, the polynomial multiplied to $Q'_{1-\alpha}(g(\lambda, B))$, is cubic in $\lambda$ while $2[\lambda^2 + (1 + \lambda)^2]$, the polynomial multiplied to $Q_{1-\alpha}(g(\lambda, B))$, is only quadratic. Taken together, (\ref{eq:deriv}) is negative at $\lambda = 0$ and positive at $\lambda = \bar\lambda$. By the continuity of $L_2'(\lambda; \alpha, B)$, its minimum is attained somewhere between 0 and $\bar\lambda$.  
\end{proof}

In the next proposition, we show that the new regularization-based CI is no wider than the partial conditioning CI for any fixed $\lambda, \alpha$, and $B$. The proof of the proposition relies on a technical lemma (Lemma \ref{lem:noncent}) about the non-central chi-square quantile. A graphical illustration of the Lemma \ref{lem:noncent} is presented in Figure \ref{fig:len}B.

\begin{lemma}
  For any given $\alpha\in(0, 1)$, $\sqrt{Q_\alpha(\gamma)} - z_\alpha\le\sqrt{\gamma}$ for all $\gamma\ge 0$. Moreover, the inequality is strict for all $\gamma > 0$.
  \label{lem:noncent}
\end{lemma}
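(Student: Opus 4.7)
The plan is to translate the quantile inequality into a Gaussian tail statement by exploiting the distributional identity $\chi^2(1,\gamma)\stackrel{d}{=}(Z+\sqrt{\gamma})^2$ with $Z\sim\sN(0,1)$. Set $\mu=\sqrt{\gamma}$, $s=\sqrt{Q_\alpha(\gamma)}$, and let $s_0=\sqrt{Q_\alpha(0)}$; this is the quantity carried by ``$z_\alpha$'' in the lemma statement, via the identification $z_{1-\alpha/2}^2=Q_{1-\alpha}(0)$ recorded after (\ref{eq:stdci}), so that $\pr(-s_0\le Z\le s_0)=\alpha$. The claim reduces to showing $s\le s_0+\mu$, with strict inequality whenever $\mu>0$.

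Since the CDF of $\chi^2(1,\gamma)$ is continuous and strictly increasing on $[0,\infty)$, and both $s$ and $s_0+\mu$ are nonnegative, it suffices to prove the ``inverse'' probability inequality
\begin{equation*}
  \pr\{(Z+\mu)^2\le(s_0+\mu)^2\}\ge\alpha,
\end{equation*}
with strict inequality for $\mu>0$; this forces $(s_0+\mu)^2\ge s^2=Q_\alpha(\gamma)$ and the conclusion follows upon taking positive roots. To verify the probability inequality, I would unpack the squared event: $(Z+\mu)^2\le(s_0+\mu)^2$ is equivalent to $-s_0-2\mu\le Z\le s_0$, a band that, for any $\mu\ge 0$, contains the symmetric interval $[-s_0,s_0]$. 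Since $\pr(-s_0\le Z\le s_0)=\alpha$ by the definition of $s_0$, and the standard Gaussian density is strictly positive on $\real$, widening only the lower endpoint from $-s_0$ to $-s_0-2\mu$ contributes strictly positive probability whenever $\mu>0$.

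There is no substantive obstacle: the argument is essentially one line once the squared event is rewritten as a one-sided extension of the symmetric Gaussian interval. The only bookkeeping point is to record that the $\chi^2(1,\gamma)$ density is positive on $(0,\infty)$, so the probability comparison transfers to a quantile comparison without loss, and that the extension is strictly asymmetric (only the lower endpoint moves), which is precisely what delivers the strict inequality in the $\mu>0$ case.
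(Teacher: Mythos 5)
Your argument is correct, and it reaches the conclusion by a genuinely different route than the paper. The paper sets $h(\mu)=\sqrt{Q_\alpha(\mu^2)}$, implicitly differentiates the defining identity $\Phi(h(\mu)-\mu)-\Phi(-h(\mu)-\mu)=\alpha$, and obtains $h'(\mu)=\{\phi(h-\mu)-\phi(-h-\mu)\}/\{\phi(h-\mu)+\phi(-h-\mu)\}$, from which the needed bound follows because this ratio lies strictly between $0$ and $1$ (so $h(\mu)-h(0)<\mu$). You instead avoid calculus entirely: rewriting $(Z+\mu)^2\le(s_0+\mu)^2$ as $-s_0-2\mu\le Z\le s_0$ exhibits the event as a one-sided enlargement of the symmetric band $[-s_0,s_0]$ of probability $\alpha$, and the quantile inequality follows from strict monotonicity of the noncentral chi-square CDF. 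Your version is more elementary and, in a sense, more robust: it requires no differentiability of the quantile function and it targets exactly the inequality the lemma asserts, namely the integrated form of $h'\le 1$. (The paper's write-up emphasizes $h'>0$, i.e., that $h$ is increasing, which is the content of Figure 1B but not by itself what the lemma needs; the needed bound $h'<1$ is delivered implicitly by the numerator-over-sum form of the displayed ratio. Your argument sidesteps this subtlety.) What the paper's derivative computation buys in exchange is the additional monotonicity information $0<h'<1$, which is what the figure illustrates. Your identification of the lemma's $z_\alpha$ with $\sqrt{Q_\alpha(0)}$ is also the correct reading of the paper's slightly abusive notation, consistent with how the lemma is invoked in Proposition \ref{prop:comp} with $\alpha$ replaced by $1-\alpha$.
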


\begin{proof}
  Let $\mu = \sqrt{\gamma}$ and $h(\mu) = \sqrt{Q_\alpha(\mu^2)}$. Because $h(0) - z_\alpha = \sqrt{0} = 0$, it suffices to show that $h(\mu)$ is strictly increasing in $\mu$, or equivalently $h'(\mu) > 0$, for all $\mu > 0$. As $Z\sim\sN(\mu, 1)$ implies $Z^2\sim\chi^2(1,\mu^2)$, $h(\mu)$ satisfies the following equation:
  \begin{equation}
    \Phi(h(\mu) - \mu) - \Phi(-h(\mu) - \mu) = \alpha,
    \label{eq:qchisq}
  \end{equation}
  in which $\Phi(\cdot)$ is the CDF of $\sN(0, 1)$. Differentiating both sides of (\ref{eq:qchisq}) with respect to $\mu$ and rearranging yields
  \begin{equation}
    h'(\mu) = \frac{\phi(h(\mu) - \mu) - \phi(- h(\mu) - \mu)}{\phi(h(\mu) - \mu) + \phi(- h(\mu) - \mu)} = \frac{\phi(|h(\mu) - \mu|) - \phi(|h(\mu) + \mu|)}{\phi(|h(\mu) - \mu|) + \phi(|h(\mu) + \mu|)},
    \label{eq:dqchisq}
  \end{equation}
  in which $\phi(\cdot)$ is the density of $\sN(0, 1)$. The second equality in (\ref{eq:dqchisq}) holds because $\phi(x) = \phi(-x) = \phi(|x|)$. For the reasons that $\phi(|x|)$ is strictly decreasing in $|x|$ and that $(h + \mu)^2 - (h - \mu)^2 = 4h\mu > 0$ for all $h, \mu > 0$, the numerator on the right-hand side of (\ref{eq:dqchisq}), and consequently $h'(\mu)$, is positive.
\end{proof}

\begin{proposition}
  For any given $\lambda\ge 0$, $\alpha\in(0, 1)$ and $B > 0$, $L_2(\lambda; \alpha, B)\le L_1(\lambda; \alpha, B)$.
  \label{prop:comp}
\end{proposition}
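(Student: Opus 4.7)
The plan is to reduce the claim to a direct application of Lemma \ref{lem:noncent} through a short algebraic manipulation. First, I would divide both $L_1(\lambda; \alpha, B)$ and $L_2(\lambda; \alpha, B)$ by their shared prefactor $2/(1+2\lambda)$ and by $\sqrt{\lambda^2 + (1+\lambda)^2}$. Using the identity $\sqrt{g(\lambda, B)\,[\lambda^2 + (1+\lambda)^2]} = \lambda B$, which holds for $\lambda \ge 0$ and $B > 0$ directly from the definition of $g(\lambda, B) = \lambda^2 B^2/[\lambda^2 + (1+\lambda)^2]$, the desired inequality $L_2(\lambda;\alpha,B) \le L_1(\lambda;\alpha,B)$ becomes equivalent to
\[
\sqrt{Q_{1-\alpha}(g(\lambda, B))} \,\le\, \sqrt{g(\lambda, B)} + z_{1-\alpha/2}.
\]

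Second, I would note that $z_{1-\alpha/2} = \sqrt{Q_{1-\alpha}(0)}$, since the $(1-\alpha)$-quantile of the central $\chi^2(1, 0)$ distribution equals $z_{1-\alpha/2}^2$; this is exactly the identification already used after equation~(\ref{eq:simpcontour}) to derive the standard textbook CI. With this identification, the displayed inequality above is precisely Lemma \ref{lem:noncent} applied at level $1-\alpha$ with noncentrality parameter $\gamma = g(\lambda, B) \ge 0$, and the proof concludes. The edge case $\lambda = 0$ is trivial because $g(0, B) = 0$ forces both sides of the inequality to collapse to $z_{1-\alpha/2}$, giving $L_2(0;\alpha,B) = L_1(0;\alpha,B) = 2z_{1-\alpha/2}$.

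There is no real obstacle: all of the technical content is already in Lemma \ref{lem:noncent}, and the reduction above is purely algebraic. The only points requiring mild care are the bookkeeping of normalizing constants and the observation that the noncentrality parameter $g(\lambda, B)$ has been chosen precisely so that $\sqrt{g(\lambda,B)\,[\lambda^2 + (1+\lambda)^2]}$ collapses to $\lambda B$, the very term appearing as the centering offset in $L_1$. This exact collapsing is what allows the comparison with $L_1$ to match Lemma \ref{lem:noncent} line by line rather than requiring an additional approximation argument.
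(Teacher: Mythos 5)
Your proposal is correct and follows essentially the same route as the paper: the paper writes the difference $L_1 - L_2$ as a common positive prefactor times $\sqrt{g(\lambda,B)} + z_{1-\alpha/2} - \sqrt{Q_{1-\alpha}(g(\lambda,B))}$ and then invokes Lemma \ref{lem:noncent}, which is exactly your reduction. Your explicit identification $z_{1-\alpha/2} = \sqrt{Q_{1-\alpha}(0)}$ is a welcome clarification of how the lemma's ``$z_\alpha$'' is to be read when applied at level $1-\alpha$.
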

\begin{proof}
  Taking the difference between (\ref{eq:len1}) and (\ref{eq:len2}) yields
  \begin{equation}
    \begin{aligned}
    &L_1(\lambda; \alpha, B) - L_2(\lambda; \alpha, B)\\
    =&  \frac{2\sqrt{\lambda^2 + (1 + \lambda)^2}}{1 + 2\lambda}\left[ \sqrt{g(\lambda, B)} + z_{1 - \alpha/2} - \sqrt{Q_{1 - \alpha}(g(\lambda, B))} \right].
    \end{aligned}
    \label{eq:lendiff}
  \end{equation}
  The result follows immediately from Lemma \ref{lem:noncent}.
\end{proof}

\begin{figure}[!t]
  \centering
  \includegraphics[width=0.95\textwidth]{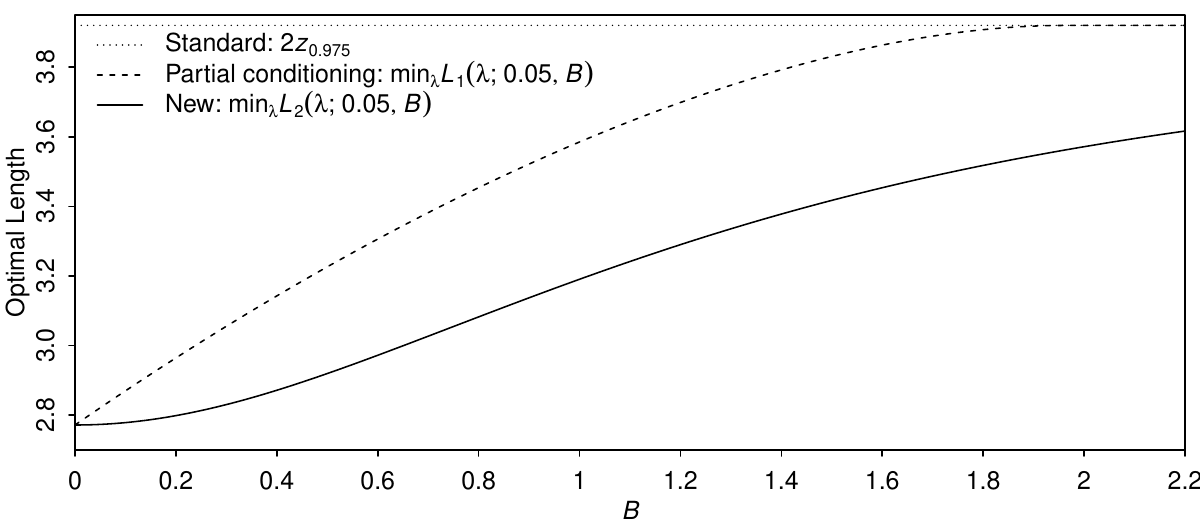}
  \caption{Comparison of confidence interval lengths.}
  \label{fig:comp}
\end{figure}

To visualize the comparison, we plot the respective optimal length functions, $\min_{\lambda\ge 0}L_1(\lambda; \alpha, B)$  and $\min_{\lambda\ge 0}L_2(\lambda; \alpha, B)$ against $B$ on the same graph (Figure \ref{fig:comp}). Again, the observed data $y = (1, 0.5)$. $\alpha$ is fixed at 0.05 while $B$ varies between 0 and 2.2. It is observed that the optimally tuned partial conditioning CI is much shorter than the standard CI for all $B < z_{0.975}\approx 1.96$. When $B\ge z_{0.975}$, the partial conditioning and standard CIs are identical. Meanwhile, our new solution based on regularization yields further efficiency gain compared to the partial conditioning solution across all values of $B$.

\section{Discussion}
\label{s:disc}

This work establishes regularization as an effective strategy for improving efficiency while preserving finite-sample validity of statistical inference. Focusing on the two-normal-means problem, we not only reproduce the partial conditioning IM solution proposed in \citet{YangEtAl2023} using a simple, regularization-based argument, but also manage to derive a more efficient regularized-based IM solution using a slightly different test statistic. We establish analytically and illustarte numerically that, compared to the partial conditioning IM, our proposal always results in narrower CIs.

An immediate question is whether the new regularized-based IM can be extended to the more general problem tackled by \citet{YangEtAl2023}: the many-normal-means problem with H\"older constraints. We conjecture that the general partial conditioning solution therein can be equivalently constructed through regularized ML estimation. The only difference is that we may need separate penalty terms for squared differences between adjacent observations. It is also contemplated that a suitable modification to the Wald-type test statistic can secure a similar efficiency gain.

For future work, we will continue to explore the integration of regularization with possibilistic IM in order to achieve valid and efficient inference in finite samples. The key insight is that a regularized estimator generates a family of valid inferential procedures indexed by penalty weights. It is then possible to select penalty weights to achieve the optimal efficiency. The major challenge in a more general context is that the measure of efficiency (e.g., the length of a CI) may depend on both data and parameters. Marginalization over the data and/or parameter space calls for more involved numerical search routines. In addition, due to the non-uniqueness of size measures in multidimensional parameter spaces, how to measure efficiency of general confidence regions remains to be an open question. Lastly, even further improvement in efficiency may be possible by first marginalizing the test statistic via the supremum operation; as in the profile-based marginal IM versus the extension-based marginal IM discussed in \cite{MartinWilliams2025}.

\setcounter{secnumdepth}{0}
\bibliographystyle{apalike}
\bibliography{NormalMeansReg}

\end{document}